\newtheorem{thm}{Theorem}
\newtheorem{cor}{Corollary}
\newtheorem{lem}{Lemma}
\begin{document}

\sloppy

\title{Flexible Backhaul Design and Degrees of Freedom \\ for Linear Interference Networks}
\author{\IEEEauthorblockN{Aly El Gamal and Venugopal V.~Veeravalli}
 \IEEEauthorblockA{ECE Department and Coordinated Science Laboratory\\University of Illinois at Urbana-Champaign\\ Email: \{elgamal1,vvv\}@illinois.edu}}

\maketitle

\begin{abstract}
The considered problem is that of maximizing the degrees of freedom (DoF) in cellular downlink, under a \emph{backhaul load} constraint that limits the number of messages that can be delivered from a centralized controller to the base station transmitters. A linear interference channel model is considered, where each transmitter is connected to the receiver having the same index as well as one succeeding receiver. The backhaul load is defined as the sum of all the messages available at all the transmitters normalized by the number of users. When the backhaul load is constrained to an integer level $B$, the asymptotic per user DoF is shown to equal $\frac{4B-1}{4B}$, and it is shown that the optimal assignment of messages to transmitters is asymmetric and satisfies a local cooperation constraint and that the optimal coding scheme relies only on zero-forcing transmit beamforming. Finally, an extension of the presented coding scheme for the case where $B=1$ is shown to apply for more general locally connected and two-dimensional networks.
\end{abstract}

\section{Introduction}
Managing wireless interference through infrastructural enhancements is a major consideration for next generation cellular networks. One example of such an enhancement is in cellular downlink through the assignment of one receiver's message to multiple base station transmitters and managing interference through a Coordinated Multi-Point Transmission (CoMP) scheme. 
The cost of delivering messages to multiple transmitters over a backhaul link is highlighted in this work. 

In~\cite{ElGamal-Annapureddy-Veeravalli-arXiv12}, the degrees of freedom (DoF) gain offered by CoMP transmission in Wyner's linear interference networks~\cite{Wyner} was studied, under a cooperation constraint that limits the number of transmitters at which each message can be available by a number $M$. The asymptotic limit of the per user DoF as the number of users goes to infinity was shown to be $\frac{2M}{2M+1}$, and was shown to be achieved by a simple coding scheme that relies only on zero-forcing transmit beamforming. It is to be noted that the maximum transmit set size constraint of $M$ is not met tightly for all messages in the optimal message assignment scheme presented in~\cite{ElGamal-Annapureddy-Veeravalli-arXiv12}. In this work, we therefore consider a cooperation constraint that is more general and relevant to many scenarios of practical significance. In particular, we define the \emph{backhaul load} constraint $B$ as the ratio between the sum of the transmit set sizes for all the messages and the number of users. In other words, we allow the transmit set size constraints to vary across the messages, while maintaining a constraint on the average transmit set size of $B$.  We establish in this paper that the asymptotic per user DoF in this new setting is $\frac{4B-1}{4B}$, which is larger than the per user DoF of $\frac{2B}{2B+1}$ obtained with the more stringent per message transmit set size constraint of $B$.

Furthermore, we show that the scheme that achieves the optimal DoF of $\frac{4B-1}{4B}$ uses only zero-forcing beamforming at the transmitters, and assigns messages non-uniformly across the transmitters, with some messages being assigned to more than $B$ transmitters and others being assigned to fewer than $B$ transmitters. We show that these insights can apply to more general channel models than the simple linear model considered in this work.

We describe the system model in Section~\ref{sec:systemmodel}. We then provide an illustrative example for the considered problem in Section~\ref{sec:example}. The main result is proved in Section~\ref{sec:main}. We then discuss the result and its generalizations in Section~\ref{sec:discussion}. Finally, we provide concluding remarks in Section~\ref{sec:conclusion}.
\section{System Model and Notation}\label{sec:systemmodel}
We use the standard model for the $K-$user interference channel with single-antenna transmitters and receivers,
\begin{equation}
Y_i(t) = \sum_{j=1}^{K} H_{i,j}(t) X_j(t) + Z_i(t),
\end{equation}
where $t$ is the time index, $X_j(t)$ is the transmitted signal of transmitter $j$, $Y_i(t)$ is the received signal at receiver $i$, $Z_i(t)$ is the zero mean unit variance Gaussian noise at receiver $i$, and $H_{i,j}(t)$ is the channel coefficient from transmitter $j$ to receiver $i$ over the time slot $t$. We remove the time index in the rest of the paper for brevity unless it is needed. For any set ${\cal A} \subseteq [K]$, we use the abbreviations $X_{\cal A}$, $Y_{\cal A}$, and $Z_{\cal A}$ to denote the sets $\left\{X_i, i\in {\cal A}\right\}$, $\left\{Y_i, i\in {\cal A}\right\}$, and $\left\{Z_i, i\in {\cal A}\right\}$, respectively. Finally, we use $[K]$ to denote the set $\{1,2,\ldots,K\}$, and use $\phi$ to denote the empty set.

\subsection{Channel Model}
Each transmitter is connected to its corresponding receiver as well as one following receiver, and the last transmitter is only connected to its corresponding receiver. More precisely,

\begin{equation}\label{eq:channel}
H_{i,j} = 0 \text { iff } i \notin \{j,j+1\},\forall i,j \in [K],
\end{equation}
and all non-zero channel coefficients are drawn independently from a continuous joint distribution. Finally, we assume that global channel state information is available at all transmitters and receivers. The channel model is illustrated for $K=3$ in Figure~\ref{fig:wynermodel}.

\begin{figure}[htb]
\centering
\includegraphics[width=0.8\columnwidth]{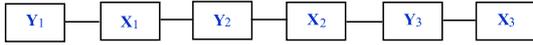}
\caption{Wyner's linear asymmetric model for $K=3$. In the figure, a solid line connects a transmitter-receiver pair if and only if the channel coefficient between them is non-zero.}
\label{fig:wynermodel}
\end{figure}

\subsection{Message Assignment}
For each $i \in [K]$, let $W_i$ be the message intended for receiver $i$, and ${\cal T}_i \subseteq [K]$ be the transmit set of receiver $i$, i.e., those transmitters with the knowledge of $W_i$. The transmitters in ${\cal T}_i$ cooperatively transmit the message $W_i$ to the receiver $i$. The average transmit set size is upper bounded by an integer valued backhaul load constraint $B$,
\begin{equation}\label{eq:backhaul_constraint}
\frac{\sum_{i=1}^K |{\cal T}_i|}{K} \leq B.
\end{equation}

\subsection{Message Assignment Strategy}

A message assignment strategy is defined by a sequence of supersets. The $k^{th}$ element in the sequence consists of the transmit sets for a $k-$user channel. We use message assignment strategies to define a pattern for assigning messages to transmitters in large networks. 

\subsection{Local Cooperation}\label{sec:localcooperation}

We say that a message assignment strategy satisfies the local cooperation constraint, if and only if there exists a function $r(K)$ such that $r(K)=o(K)$, and for every $K\in{\bm Z}^{+}$, the transmit sets defined by the strategy for a $K-$user channel satisfies the following,
\begin{equation}\label{eq:localcooperation}
{\cal T}_{i} \subseteq \{i-r(K),i-r(K)+1,\ldots,i+r(K)\}, \forall i\in[K].
\end{equation}

\subsection{Degrees of Freedom}
Let $P$ be the average transmit power constraint at each transmitter, and let ${\cal W}_i$ denote the alphabet for message $W_i$. Then the rates $R_i(P) = \frac{\log|{\cal W}_i|}{n}$ are achievable if the decoding error probabilities of all messages can be simultaneously made arbitrarily small for a large enough coding block length $n$, and this holds for almost all channel realizations. The degrees of freedom $d_i, i\in[K],$ are defined as $d_i=\lim_{P \rightarrow \infty} \frac{R_i(P)}{\log P}$. The DoF region ${\cal D}$ is the closure of the set of all achievable DoF tuples. The total number of degrees of freedom ($\eta$) is the maximum value of the sum of the achievable degrees of freedom, $\eta=\max_{\cal D} \sum_{i \in [K]} d_i$.

For a $K$-user channel, we define $\eta(K,B)$ as the best achievable $\eta$ over all choices of transmit sets satisfying the backhaul load constraint in \eqref{eq:backhaul_constraint}. 
In order to simplify our analysis, we define the asymptotic per user DoF $\tau(B)$ to measure how $\eta(K,B)$ scales with $K$ while all other parameters are fixed,
\begin{equation}
\tau(B) = \lim_{K\rightarrow \infty} \frac{\eta(K,B)}{K},
\end{equation}

We call a message assignment strategy \emph{optimal} for a sequence of $K-$user channels, $K\in\{1,2,\ldots\}$,  if and only if there exists a sequence of coding schemes achieving $\tau(B)$ using the transmit sets defined by the message assignment strategy.

\section{Example: $B=1$}\label{sec:example}
Before introducing the main result, we illustrate through a simple example that the potential flexibility in the backhaul design according to the constraint in~\eqref{eq:backhaul_constraint} can offer DoF gains over a traditional design where all messages are assigned to the same number of transmitters. We know from~\cite{ElGamal-Annapureddy-Veeravalli-arXiv12} that any asymptotic per user DoF greater than $\frac{2}{3}$ cannot be achieved through assigning each message to one transmitter. We now show that $\tau(B=1)\geq\frac{3}{4}$, by allowing few messages to be available at more than one transmitter at the cost of not transmitting other messages. Consider the following assignment of the first four messages, ${\cal T}_1=\{1,2\}$, ${\cal T}_2=\{2\}$, ${\cal T}_3=\phi$, and ${\cal T}_4=\{3\}$. Message $W_1$ is transmitted through $X_1$ to $Y_1$ without interference. Since  the channel state information is known at the second transmitter, the transmit beam for $W_1$ at $X_2$ can be designed to cancel the interference caused by $W_1$ at $Y_2$, and then $W_2$ can be transmitted through $X_2$ to $Y_2$ without interference. Finally, $W_4$ is transmitted through $X_3$ to $Y_4$ without interference. It follows that the sum DoF for the first four messages $\sum_{i=1}^4 d_i \geq 3$. Since the fourth transmitter is inactive, the subnetwork consisting of the first four users does not interfere with the rest of the network, and hence, we can see that $\tau(B=1) \geq \frac{3}{4}$ through similar assignment of messages in each consecutive $4$-user subnetwork.

\section{Main Result}\label{sec:main}
We now characterize the asymptotic per user DoF $\tau(B)$ for any integer value of the backhaul load constraint.
\begin{thm}\label{thm:main_result}
The asymptotic per user DoF $\tau(B)$ is given by,
\begin{equation}\label{eq:main_result}
\tau(B)=\frac{4B-1}{4B}, \forall B\in{\bf Z}^+.
\end{equation}
\end{thm}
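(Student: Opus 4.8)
The plan is to establish the two matching bounds $\tau(B)\geq\frac{4B-1}{4B}$ (achievability) and $\tau(B)\leq\frac{4B-1}{4B}$ (converse) separately, each for every positive integer $B$, and to conclude by taking $K\to\infty$.

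For achievability I would partition the $K$ users into consecutive blocks of size $4B$ and serve exactly $4B-1$ users per block, sacrificing a single user so that the per-user DoF tends to $\frac{4B-1}{4B}$. Within a block, labelled $1,\dots,4B$, I would leave user $2B+1$ unserved and place two oppositely oriented ``staircase'' chains that both lean on this one user as a common buffer. Concretely, the left chain $1,\dots,2B$ is served from forward transmitters with the nested loads $\mathcal{M}_i:=\{k:i\in\mathcal{T}_k\}=\{1,\dots,i\}$, while the right chain $2B+2,\dots,4B$ is served as a mirror image so that transmitter $4B$ is inactive (its last user served from behind). A direct count gives the total block backhaul $\sum_{i=1}^{2B} i+\sum_{i=1}^{2B-1} i=\frac{2B(2B+1)}{2}+\frac{(2B-1)2B}{2}=4B^2$, so \eqref{eq:backhaul_constraint} holds with equality on average. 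I would then verify that plain zero-forcing suffices: each $W_j$ occupies an interval of transmitters and must be nulled at every unintended receiver that hears it except the buffer user, so the number of beams exceeds the number of nulling constraints by exactly one, leaving a single dimension for delivery, which is feasible for almost all channel realizations by genericity. Since every transmit set lies inside its length-$4B$ block, the strategy obeys local cooperation with $r(K)=4B=o(K)$ and is visibly asymmetric, as claimed.

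For the converse I would use a genie-aided argument in the spirit of~\cite{ElGamal-Annapureddy-Veeravalli-arXiv12}. The key lemma is that if there exist a receiver set $\mathcal{A}$ and a message set $\mathcal{B}$ such that all of $W_{[K]}$ can be reconstructed (in the DoF sense) from $\left(Y_{\mathcal{A}},W_{\mathcal{B}}\right)$, then $\sum_{i\in[K]} d_i\leq|\mathcal{A}|+|\mathcal{B}|$, since Fano's inequality bounds $\sum_{i\notin\mathcal{B}}d_i$ by the number $|\mathcal{A}|$ of observed scalar outputs. I would then exhibit, for every assignment obeying the average backhaul bound, sets with $|\mathcal{A}|+|\mathcal{B}|\leq\frac{4B-1}{4B}K+o(K)$. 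Reconstruction proceeds by propagation along the line: a signal $X_j$ is known once every message it carries is known, each observation $Y_i$ with $i\in\mathcal{A}$ converts knowledge of $X_{i-1}$ into $X_i$ (or vice versa), and whenever a known $X_j$ carries a single still-unknown message, that message is revealed. The budget $\sum_i|\mathcal{T}_i|\leq BK$ forces a positive density of lightly loaded transmitters, and these let the propagation reveal fresh messages cheaply; amortizing the savings is what produces the $\frac{1}{4B}$ reduction below $K$.

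I expect this converse covering bound to be the main obstacle. Unlike the per-message constraint of~\cite{ElGamal-Annapureddy-Veeravalli-arXiv12}, the load here is limited only \emph{on average}, so an adversarial assignment may concentrate large transmit sets in some regions and starve others; the construction of $(\mathcal{A},\mathcal{B})$ must therefore be robust to an arbitrary distribution of the budget, which I would handle by an optimization, or equivalently a greedy amortized accounting, showing that no allocation of total budget $BK$ can push the minimal cover size above $\frac{4B-1}{4B}K$. The remaining points---feasibility of the zero-forcing beams for almost all channels, the $o(K)$ edge effects of the outermost blocks, and the passage from reconstructing transmit signals to decoding messages---I would treat as routine once the covering optimization is settled.
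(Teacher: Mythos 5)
Your achievability half is essentially the paper's own scheme and is correct: the same $4B$-user blocks with user $2B+1$ sacrificed, the same nested two-sided assignment (your condition $\mathcal{M}_i=\{1,\ldots,i\}$ is exactly the paper's ${\cal T}_k=\{k,\ldots,2B\}$ on the left chain, mirrored on the right), the same block backhaul count $4B^2$, the same one-spare-dimension zero-forcing count, and the same $r(K)=4B$ local cooperation observation.

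The converse, however, has a genuine gap, and it sits exactly where you predicted it would: the ``covering optimization'' is not a routine afterthought but the heart of the proof, and you have not supplied it. The paper splits the converse into (i) an information-theoretic lemma --- if some set ${\cal S}$ of messages all satisfy $|{\cal T}_i|\leq M$, then $\eta \leq K-\frac{|{\cal S}|}{2M+1}+o(K)$ --- and (ii) a combinatorial lemma stating that \emph{every} assignment with average load $B$ admits some $M\in\{0,\ldots,2B-1\}$ together with a set ${\cal S}$ of size at least $\frac{2M+1}{4B}K$ on which $|{\cal T}_i|\leq M$. Part (ii) is where the constant $\frac{1}{4B}$ comes from: the paper identifies the extremal size distribution $R_0^*=R_{2B}^*=\frac{1}{4B}$, $R_j^*=\frac{1}{2B}$ for $j\in\{1,\ldots,2B-1\}$, which meets the backhaul constraint with equality and equalizes $\sum_{j=0}^{M}R_j^*=\frac{2M+1}{4B}$ for \emph{every} $M$, and then argues by contradiction (reassigning mass across transmit-set sizes) that no admissible assignment can stay strictly below this profile for all $M$ simultaneously. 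Your ``greedy amortized accounting'' names the goal but not the mechanism: nothing in your sketch captures the trade-off that messages with load at most $M$ only produce converse savings at rate $\frac{1}{2M+1}$ per message, which is precisely what makes the extremal profile the right object and fixes the answer at $\frac{4B-1}{4B}$ rather than some other constant. Without this, an adversarial assignment spreading sizes over $\{0,1,\ldots,2B\}$ defeats your argument as written.

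A secondary flaw in your propagation step: you reveal a message whenever ``a known $X_j$ carries a single still-unknown message.'' That inference is invalid --- $X_j$ is an arbitrary encoding function of the messages it carries (it could be identically zero), so knowing $X_j$ and all but one of its messages does not determine the last one. The paper's reconstruction never inverts a transmit signal; it rebuilds transmit signals up to noise terms and invokes the prior work's lemma, in which messages are ultimately recovered through reconstructed \emph{received} signals and Fano's inequality. It also needs the prior work's corollary that transmit sets may be truncated to a window of width roughly $2M$ around the intended receiver before propagation can be localized; nothing in your sketch plays that role.
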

\begin{proof}
We provide the proof for the inner and outer bounds in Section~\ref{sec:ib} and Section~\ref{sec:ub}, respectively.
\end{proof}
\subsection{Coding Scheme}\label{sec:ib}
We treat the network as a set of subnetworks, each consisting of consecutive $4B$ transceivers. The last transmitter of each subnetwork is deactivated to eliminate {\em inter-subnetwork} interference. It then suffices to show that $4B-1$ DoF can be achieved in each subnetwork. Without loss of generality, consider the cluster of users with indices in the set $[4B]$. We define the following subsets of $[4B]$,
\begin{eqnarray*}
{\cal S}_1 &=& [2B]
\\{\cal S}_2 &=& \{2B+2,2B+3,\ldots,4B\}
\end{eqnarray*}
We next show that each user in ${\cal S}_1 \cup {\cal S}_2$ achieves one degree of freedom, while message $W_{2B+1}$ is not transmitted. Let the message assignments be as follows,\\

${\cal T}_{i}=
\begin{cases}
\{i,i+1,\ldots,2B\}, \quad &\forall i \in {\cal S}_1,\\
\{i-1,i-2,\ldots,2B+1\},\quad &\forall  i \in {\cal S}_2,
\end{cases}$\\
and note that $\frac{\sum_{i=1}^{4B} |{\cal T}_i|}{4B}=B$, and hence, the constraint in~\eqref{eq:backhaul_constraint} is satisfied. Now, due to the availability of channel state information at the transmitters, the transmit beams for message $W_i$ can be designed to cancel its effect at receivers with indices in the set ${\cal C}_i$, where,\\

${\cal C}_{i}=
\begin{cases}
\{i+1, i+2, \ldots,2B\},\quad  &\forall i \in {\cal S}_1\\
\{i-1, i-2, \ldots,2B+2\},\quad &\forall  i \in {\cal S}_2
\end{cases}$\\

Note that both ${\cal C}_{2B}$ and ${\cal C}_{2B+2}$ equal the empty set, as both $W_{2B}$ and $W_{2B+2}$ do not contribute to interfering signals at receivers in the set $Y_{{\cal S}_1} \cup Y_{{\cal S}_2}$. The above scheme for $B=2$ is illustrated in Figure~\ref{fig:btwo}.
We conclude that each receiver whose index is in the set ${\cal S}_1\cup{\cal S}_2$ suffers only from Gaussian noise, thereby enjoying one degree of freedom. Since $|{\cal S}_1\cup{\cal S}_2|=4B-1$, it follows that $\sum_{i=1}^{4B} d_i \geq 4B-1$. Using a similar argument for each following subnetwork, we establish that $\tau(B) \geq \frac{4B-1}{4B}$, thereby proving the lower bound of Theorem~\ref{thm:main_result}.

\begin{figure}[htb]
\centering
\includegraphics[width=0.5\columnwidth]{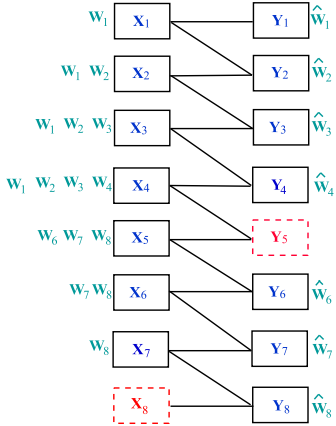}
\caption{Achieving $7/8$ per user DoF with a backhaul constraint $B=2$. The figure shows only signals corresponding to the first subnetwork in a general $K-$user network. The signals in the dashed boxes are deactivated. Note that the deactivation of $X_8$ splits this part of the network from the rest.}
\label{fig:btwo}
\end{figure} 

We note that the illustrated message assignment strategy satisfies the local cooperation constraint of~\eqref{eq:localcooperation}. In other words, the network can be split into subnetworks, each of size $4B$, and the messages corresponding to users in a subnetwork can only be assigned to transmitters with indices in the same subnetwork.
\subsection{Upper Bound}\label{sec:ub}
We prove the converse of Theorem~\ref{thm:main_result} in two steps. First, we provide an information theoretic argument in Lemma~\ref{lem:aux_ub} to prove an upper bound on the DoF of any network that has a subset of messages whose transmit set sizes are bounded. We then finalize the proof with a combinatorial argument that shows the existence of such a subset of messages in any assignment of messages satisfying the backhaul constraint of~\eqref{eq:backhaul_constraint}.

In order to prove the information theoretic argument in Lemma~\ref{lem:aux_ub}, we use Lemma $4$ from~\cite{ElGamal-Annapureddy-Veeravalli-arXiv12}, which we restate below. For any set of receiver indices ${\cal A} \subseteq [K]$, define $U_{\cal A}$ as the set of indices of transmitters that exclusively carry the messages for the receivers in ${\cal A}$, and its complement $\bar{U}_{\cal A}$. More precisely, $\bar{U}_{\cal A} = \cup_{i \notin {\cal A}} {\cal T}_i$.
\begin{lem} [\cite{ElGamal-Annapureddy-Veeravalli-arXiv12}] \label{lem:dofouterbound}
If there exists a set ${\cal A}\subseteq [K]$, a function $f_1$, and a function $f_2$ whose definition does not depend on the transmit power constraint $P$, and $f_1\left(Y_{\cal A},X_{U_{\cal A}}\right)=X_{\bar{U}_{\cal A}}+f_2(Z_{\cal A})$, then the sum DoF $\eta \leq |{\cal A}|$.
\end{lem}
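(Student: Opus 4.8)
The plan is to establish the bound through a genie-aided converse: I would show that a single centralized decoder that observes $Y_{\cal A}$ and is handed, as genie side information, the messages $\{W_i : i \in {\cal A}\}$ can recover \emph{all} $K$ messages with vanishing error probability. Once that is in place, a routine entropy computation forces $\eta \le |{\cal A}|$, since the observations $Y_{\cal A}$ can carry at most $|{\cal A}|$ degrees of freedom.

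The first step is to exploit the defining property of $U_{\cal A}$. Every transmitter lies in exactly one of $U_{\cal A}$ and $\bar{U}_{\cal A}$, where $\bar{U}_{\cal A}=\cup_{i\notin{\cal A}}{\cal T}_i$ collects the transmitters carrying a message for a receiver outside ${\cal A}$; the transmitters in $U_{\cal A}$ therefore carry only messages destined for receivers in ${\cal A}$. Consequently each $X_j$, $j\in U_{\cal A}$, is a deterministic function of $\{W_i : i\in{\cal A}\}$ through the encoders, so the genie-aided decoder can compute $X_{U_{\cal A}}$ from its side information alone. Feeding $Y_{\cal A}$ together with this $X_{U_{\cal A}}$ into the hypothesized map $f_1$ yields $X_{\bar{U}_{\cal A}}+f_2(Z_{\cal A})$, i.e.\ a reconstruction of the remaining transmit signals corrupted only by the \emph{power-independent} additive term $f_2(Z_{\cal A})$.

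Next I would argue decodability. Since the decoder now holds $X_{U_{\cal A}}$ exactly and $X_{\bar{U}_{\cal A}}$ up to bounded noise, it holds noisy versions of all transmit signals, and can synthesize for each $i\notin{\cal A}$ the superposition $\sum_j H_{i,j}X_j$ up to the perturbation induced by $f_2(Z_{\cal A})$, thereby reproducing $Y_i$ to within additive noise whose statistics do not scale with $P$. Because the original code decodes $W_i$ reliably from the genuinely noisy $Y_i$, the same decoder succeeds on this reconstruction for almost every channel realization, and Fano's inequality gives $H\!\left(\{W_i:i\notin{\cal A}\}\mid Y_{\cal A},\{W_i:i\in{\cal A}\}\right)\le n\epsilon_n$. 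Writing $H(W_{[K]})=H(W_{\cal A})+H(W_{[K]\setminus{\cal A}}\mid W_{\cal A})$, I would bound the first term by Fano at the receivers inside ${\cal A}$ and the second by the decodability just shown; the resulting mutual-information terms combine via the chain rule into $I(W_{[K]};Y_{\cal A})\le \entropy{Y_{\cal A}}-\entropy{Z_{\cal A}}$. Using $\entropy{Y_{\cal A}}\le |{\cal A}|\,n\log P+o(n\log P)$ and $\entropy{Z_{\cal A}}=O(n)$, dividing by $n\log P$ and letting $P\to\infty$ yields $\sum_i d_i\le|{\cal A}|$.

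I expect the main obstacle to be the decodability step, namely making rigorous the claim that a reconstruction degraded only by the $P$-independent noise $f_2(Z_{\cal A})$ still permits reliable decoding of the out-of-${\cal A}$ messages. The clean way to handle this is to note that such bounded additive noise contributes zero to the DoF and so cannot erase a strictly positive DoF margin; the delicate part is phrasing the argument so that the ``almost all channel realizations'' qualifier, under which the functional identity and the reliability of the original code both hold, is preserved throughout the chain.
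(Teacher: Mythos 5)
Your proposal is correct and follows essentially the same route as the proof of this lemma in the cited reference~\cite{ElGamal-Annapureddy-Veeravalli-arXiv12} (the present paper only restates the lemma without reproving it): a genie hands $W_{\cal A}$ to a decoder observing $Y_{\cal A}$, which computes $X_{U_{\cal A}}$ from the encoders, applies $f_1$ to recover $X_{\bar{U}_{\cal A}}$ up to power-independent noise, reconstructs the remaining received signals to within bounded distortion, and then Fano plus the chain rule bound the sum rate by $I\left(W_{[K]};Y_{\cal A}\right)$, giving $\eta\leq|{\cal A}|$. Your closing caveat about the decodability step is handled in the reference exactly as you suggest: the additive term $f_2(Z_{\cal A})$ does not scale with $P$, so it costs only $o(\log P)$ per channel use and vanishes in the DoF limit.
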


We also need~\cite[Corollary $3$]{ElGamal-Annapureddy-Veeravalli-arXiv12} in the proof of Lemma~\ref{lem:aux_ub}; we restate it for the considered system model.
\begin{cor}[\cite{ElGamal-Annapureddy-Veeravalli-arXiv12}]
\label{cor:dofouterbound}
For any $K-$user linear interference channel, if the size of the transmit set $|{\cal T}_i| \leq M, i\in[K],$ then any element $k \in {\cal T}_i$ such that $k \notin \{i-M,i-M+1,\ldots,i+M-1\}$ can be removed from ${\cal T}_i$, without decreasing the sum rate.
\end{cor}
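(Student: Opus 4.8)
The plan is to exploit the monotonicity of the achievable sum rate under message assignment together with the strict locality of the channel in \eqref{eq:channel}. Observe first that enlarging any transmit set can never decrease the sum rate, since a transmitter holding an extra message may simply ignore it; hence removing $k$ from ${\cal T}_i$ cannot \emph{increase} the sum rate, and it suffices to prove the reverse inequality, namely that the copy of $W_i$ held at a far transmitter $k$ is redundant so that its removal costs nothing. I would therefore fix an arbitrary reliable coding scheme that uses the full set ${\cal T}_i$ and aim to construct a scheme over ${\cal T}_i\setminus\{k\}$ that achieves the same rate tuple.

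The second step isolates the role of transmitter $k$. By \eqref{eq:channel} the transmitted signal $X_k$ is heard only at receivers $k$ and $k+1$. In the forward case $k\geq i+M$ (the backward case $k\leq i-M-1$ being symmetric up to the one-step asymmetry of the model, in which receiver $i$ hears transmitters $i-1$ and $i$) both $k$ and $k+1$ lie strictly beyond receiver $i$, so $X_k$ makes no direct contribution to $Y_i$. Consequently the \emph{only} conceivable benefit of placing $W_i$ at transmitter $k$ is interference management at receivers $k$ and $k+1$, and this can be realized only in cooperation with the other members of ${\cal T}_i$ that reach those receivers, i.e.\ transmitters $k-1$, $k$, and $k+1$. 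Propagating this observation, any cooperative cluster that involves $k$ and reaches back toward receiver $i$ must occupy a contiguous run of indices in ${\cal T}_i$ bridging the gap between $i$ and $k$. Since $k\geq i+M$, such a run would need more than $M$ distinct transmitters, contradicting $|{\cal T}_i|\leq M$. Hence the copy of $W_i$ at transmitter $k$ cannot simultaneously help deliver $W_i$ to its intended receiver and be linked to the cluster serving receiver $i$: it is effectively isolated and can be removed, with the endpoints $i-M$ and $i+M-1$ of the window reflecting the forward-only direction of interference in the model.

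The main obstacle is making the phrase \emph{``an isolated copy can be removed without decreasing any rate''} rigorous for arbitrary coding schemes, not merely for zero-forcing beamforming. The delicate point is that forcing $X_k$ to be independent of $W_i$ perturbs $Y_k$ and $Y_{k+1}$, and through the coupled decoding of the network this perturbation could in principle propagate to other receivers. The clean way to control it is an information-theoretic substitution argument: introduce a genie that supplies the finitely many messages carried by transmitter $k$ to the affected downstream receivers, replace the $W_i$-dependent component of $X_k$ by a fixed symbol, and verify that decodability at every receiver is preserved because the removed component influences only receivers whose decoding can be restored from the genie information at no cost in DoF. Establishing this substitution rigorously, and in particular checking that it never degrades the decoding of $W_i$ itself (which reaches receiver $i$ only through $X_{i-1}$ and $X_i$, and is therefore untouched by the modification at $X_k$), is where the real work lies.
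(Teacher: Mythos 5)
First, note that the paper you were asked to match does not actually prove this corollary: it is imported verbatim from \cite{ElGamal-Annapureddy-Veeravalli-arXiv12} (Corollary 3 there), so the only fair comparison is against the argument underlying that cited result. Your combinatorial skeleton does match that argument: monotonicity of the sum rate in the transmit sets, the observation that by \eqref{eq:channel} the signal $X_k$ reaches only $Y_k$ and $Y_{k+1}$ and hence never $Y_i$, and the chain argument that a contiguous run in ${\cal T}_i$ linking $\{i-1,i\}$ to $k$ would require more than $M$ transmitters --- which also correctly explains the asymmetric window $\{i-M,\ldots,i+M-1\}$. Up to that point your reasoning is sound.

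However, there is a genuine gap, and you flag it yourself: the decisive step --- that a copy of $W_i$ disconnected from $\{i-1,i\}$ can be stripped \emph{for arbitrary coding schemes} without any loss in sum rate --- is left as ``where the real work lies,'' and the sketch you offer for it would not go through as stated. Two concrete problems: (i) the ``$W_i$-dependent component of $X_k$'' is not a well-defined object for a general encoder (only linear schemes decompose into per-message components); the correct operation is to substitute a fixed dummy value $w^*$ into the $W_i$ argument of the encoding function. (ii) A genie cannot appear in an achievability-preservation argument --- the corollary asserts the existence of a genuine scheme for the reduced assignment, so genie-aided receivers prove nothing here. The repair is a substitution-plus-averaging argument: let ${\cal C}$ be the connected component (as consecutive integers) of ${\cal T}_i$ containing $k$; your chain argument shows ${\cal C}\cap\{i-1,i\}=\phi$, and maximality of ${\cal C}$ shows that no receiver hears both a transmitter in ${\cal C}$ and one in ${\cal T}_i\setminus{\cal C}$. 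Substituting $w^*$ for $W_i$ at \emph{all} transmitters in ${\cal C}$ (substituting at $k$ alone fails precisely when $k$ has neighbors in ${\cal T}_i$, since a receiver would then see a mixture of true and dummy $W_i$) leaves $Y_i$ untouched and makes every affected receiver's observation identically distributed to its observation in the original scheme conditioned on $W_i=w^*$; an averaging argument over $w^*$ then preserves all error probabilities, hence all rates, not merely DoF (your sketch weakens the claim to ``no cost in DoF,'' but the statement is about rate). Finally, monotonicity transfers the conclusion from ${\cal T}_i\setminus{\cal C}$ to ${\cal T}_i\setminus\{k\}$. Without this step carried out, the proposal is an accurate plan rather than a proof.
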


We now make the following definition to use in the proof of the following lemma. For any set ${\cal S}\subseteq [K]$, let $g_{\cal S}: {\cal S} \rightarrow \{1,2,\ldots,|{\cal S}|\}$ be a function that returns the ascending order of any element in the set ${\cal S}$, e.g., $g_{\cal S}\left(\min \left\{i: i\in{\cal S}\right\}\right)=1$ and $g_{\cal S}\left(\max \left\{i: i\in{\cal S}\right\}\right)=|{\cal S}|$

\begin{lem}\label{lem:aux_ub}
For any $K-$user linear interference channel with DoF $\eta$, if there exists a subset of messages ${\cal S} \subseteq [K]$ such that each message in ${\cal S}$ is available at a maximum of $M$ transmitters, i.e., $|{\cal T}_i| \leq M, \forall i \in {\cal S}$, then the DoF is bounded by,
\begin{equation}
\eta \leq K-\frac{|{\cal S}|}{2M+1}+C_K,
\end{equation}
where $\lim_{K \rightarrow \infty} \frac{C_K}{K} = 0$.
\end{lem}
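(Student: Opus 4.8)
The plan is to apply Lemma~\ref{lem:dofouterbound}: it suffices to exhibit a receiver set ${\cal A}\subseteq[K]$ with $|{\cal A}|\le K-\frac{|{\cal S}|}{2M+1}+C_K$ for which $X_{\bar U_{\cal A}}$ can be written as a function of $\left(Y_{\cal A},X_{U_{\cal A}}\right)$ plus a noise term whose definition does not depend on $P$. The engine for the reconstruction is the bidiagonal structure of the channel, $Y_i=H_{i,i}X_i+H_{i,i-1}X_{i-1}+Z_i$: whenever one of $X_{i-1},X_i$ is known and receiver $i$ is retained (so $Y_i$ is available), the other transmit signal is recovered up to the additive noise $Z_i$. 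Hence from a known (\emph{anchor}) transmit signal, namely one indexed in $U_{\cal A}$, one can propagate a chain of such recoveries forward or backward through any run of retained receivers.

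To build ${\cal A}$ I would first invoke Corollary~\ref{cor:dofouterbound}, applied to the constrained messages, to assume without loss of generality that ${\cal T}_i\subseteq\{i-M,\ldots,i+M-1\}$ for every $i\in{\cal S}$, so that each such message is localized to a window of $2M$ consecutive transmitters. Next I would greedily pick a subset ${\cal S}'\subseteq{\cal S}$ whose indices are pairwise at distance at least $2M+1$: scanning from the left and discarding, after each pick, the at most $2M$ indices of ${\cal S}$ lying within distance $2M$ to its right yields $|{\cal S}'|\ge\frac{|{\cal S}|}{2M+1}$. Setting ${\cal A}=[K]\setminus{\cal S}'$ then gives $|{\cal A}|\le K-\frac{|{\cal S}|}{2M+1}$, which already produces the target cardinality up to the lower order term.

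It remains to verify the reconstruction for this ${\cal A}$, which I expect to be the crux. The unknown transmit signals are $\bar U_{\cal A}=\cup_{i\in{\cal S}'}{\cal T}_i$, all contained in the well separated windows of the removed messages. The key observation is that for each removed receiver $i\in{\cal S}'$ the transmitter $i+M$ lies outside every removed window: its own window ends at $i+M-1$, and the spacing $2M+1$ keeps it out of the neighboring windows. Consequently transmitter $i+M$ carries only retained messages and hence belongs to $U_{\cal A}$, furnishing a known anchor. Every removed receiver is therefore sandwiched between two such anchors with exactly one deleted receiver in between, and the bidiagonal chain, run forward from the left anchor and backward from the right anchor, recovers all unknowns of the intervening window up to noise (the two anchors supply two boundary values that compensate for the single missing equation $Y_i$). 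Assembling these local reconstructions yields the global map $f_1$ with accumulated noise $f_2(Z_{\cal A})$, and Lemma~\ref{lem:dofouterbound} gives $\eta\le|{\cal A}|$.

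The main obstacles I anticipate are twofold. First, making the localization step rigorous when only the messages in ${\cal S}$, rather than all messages, obey the size bound, so that Corollary~\ref{cor:dofouterbound} must be read as a per-message statement; this is benign because the anchor argument only requires the \emph{removed} messages to be localized, while non-localized retained messages do no harm. Second, treating removed receivers within distance $M$ of either end of the network, where an anchor $i+M$ may fall outside $[K]$; since $M$ is fixed, one simply declines to remove the $O(1)$ offending indices, and these boundary corrections together with the rounding in the greedy count contribute only an $o(K)$ term, which is collected into $C_K$.
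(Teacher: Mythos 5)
Your proof is correct and follows essentially the same route as the paper's: both reduce to Lemma~\ref{lem:dofouterbound} by deleting a subset of ${\cal S}$ of size roughly $\frac{|{\cal S}|}{2M+1}$ whose indices are pairwise at least $2M+1$ apart, invoke Corollary~\ref{cor:dofouterbound} to localize the deleted messages' transmit sets, observe that transmitter $i+M$ then lies in $U_{\cal A}$ and serves as an anchor, and reconstruct all transmit signals by forward/backward bidiagonal chains with boundary effects absorbed into $C_K$. The only cosmetic difference is your greedy index-spaced selection of the deleted set versus the paper's choice of every $(2M+1)$-th element of ${\cal S}$ in rank order, which yields the same separation property.
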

\begin{proof}
We use Lemma~\ref{lem:dofouterbound} with a set ${\cal A}$ such that the size of the complement set $|\bar{\cal A}|=\frac{|{\cal S}|}{2M+1}-o(K)$. We define the set ${\cal A}$ such that ${\bar{\cal A}}=\{i: i\in{\cal S}, g_{\cal S}(i)= (2M+1)(j-1)+M+1, j \in {\bm{Z}^+}\}$. 

Now, we let $s_1$, $s_2$ be the smallest two indices in $\bar{\cal A}$. We see that $g_{\cal S}(s_1)=M+1, g_{\cal S}(s_2)=3M+2$. Note that $X_1+\frac{Z_1}{H_{1,1}}=\frac{Y_1}{H_{1,1}}$, and
\begin{equation*}
X_2+\frac{Z_2-\frac{H_{2,1}}{H_{1,1}}Z_1}{H_{2,2}}=\frac{Y_2-\frac{H_{2,1}}{H_{1,1}}Y_1}{H_{2,2}}.
\end{equation*}
Similarly, it is clear how the first $s_1-1$ transmit signals $X_{[s_1-1]}$ can be recovered from the received signals $Y_{[s_1-1]}$ and linear combinations of the noise signals $Z_{[s_1-1]}$. In what follows, we show how to reconstruct a noisy version of the signals $\left\{X_{s_1},X_{s_1+1},\ldots,X_{s_2-1}\right\}$, where the reconstruction noise is a linear combination of the signals $Z_{\cal A}$. Then it will be clear by symmetry how the remaining transmit signals can be reconstructed. 

We now notice that it follows from Corollary~\ref{cor:dofouterbound} that message $W_{s_1}$ can be removed from any transmitter in ${\cal T}_{s_1}$ whose index is greater than $s_1+M-1$, without affecting the sum rate. Similarly, there is no loss in generality in assuming that $\forall s_i\in{\cal S}, s_i \neq s_1$, ${\cal T}_{s_i}$ does not have an element with index less than $s_i-M$. Since $s_i-s_1\geq g_{\cal S}(s_i)-g_{\cal S}(s_1)\geq 2M+1$, it follows that $X_{s_1+M} \in X_{U_{\cal A}}$. The signal $X_{s_1+M+1}+\frac{Z_{s_1+M+1}}{H_{s_1+M+1,s_1+M+1}}$ can be reconstructed from $Y_{s_1+M+1}$ and $X_{s_1+M}$. Then, it can be seen that the transmit signals $\left\{X_{s_1+M+2},X_{s_1+M+3},\ldots,X_{s_2-1}\right\}$ can be reconstructed from $\left\{Y_{s_1+M+1}, Y_{s_1+M+2},\ldots,Y_{s_2-1}\right\}$, and linear combinations of the noise signals $\left\{Z_{s_1+M+1},Z_{s_1+M+2},\ldots,Z_{s_2-1}\right\}$. Similarly, since $X_{s_1+M}$ is known, the transmit signals $\left\{X_{s_1+M-1},X_{s_1+M-2},\ldots,X_{s_1}\right\}$ can be reconstructed from $\left\{Y_{s_1+M}, Y_{s_1+M-1},\ldots,Y_{s_1+1}\right\}$, and linear combinations of the noise signals $\left\{Z_{s_1+M},Z_{s_1+M-1},\ldots,Z_{s_1+1}\right\}$. By following a similar argument to reconstruct all transmit signals from the signals $Y_{\cal A}$, $X_{U_{\cal A}}$, and linear combinations of the noise signals $Z_{\cal A}$, we can show the existence of functions $f_1$ and $f_2$ of Lemma~\ref{lem:dofouterbound} to complete the proof.
\end{proof}

We now explain how Lemma~\ref{lem:aux_ub} can be used to prove that $\tau(B=1) \leq \frac{3}{4}$. For any message assignment satisfying~\eqref{eq:backhaul_constraint} for a $K-$user channel, let $R_{j}$ be defined as follows for every $j\in\{0,1,\ldots,K\}$,
\begin{equation}\label{eq:concentration}
R_{j} = \frac{|\left\{i:i\in[K], |{\cal T}_i|=j\right\}|}{K}.
\end{equation}
$R_{j}$ is the fraction of users whose messages are available at exactly $j$ transmitters. Now, if $R_0+R_1 \geq \frac{3}{4}$, then Lemma~\ref{lem:aux_ub} can be used directly to show that $\eta \leq \frac{3K}{4}+o(K)$. Otherwise, more than $\frac{K}{4}$ users have their messages at two or more transmitters, and it follows from~\eqref{eq:backhaul_constraint} that $R_0 \geq \sum_{j=2}^K R_j \geq \frac{1}{4}$, and hence, $\eta \leq (1-R_0)K \leq \frac{3K}{4}$.

We generalize the above argument in the proof of the following lemma to complete the proof that $\tau(B)\leq\frac{4B-1}{4B},\forall B\in \bm{Z}^+$.
\begin{lem}\label{lem:comb_argument}
For any message assignment satisfying~\eqref{eq:backhaul_constraint} for a $K-$user channel with an average transmit set size constraint $B$, there exists an integer $M\in\{0,1,\ldots,K\}$, and a subset ${\cal S} \subseteq [K]$ whose size $|{\cal S}|\geq \frac{2M+1}{4B} K$, such that each message in ${\cal S}$ is available at a maximum of $M$ transmitters, i.e., $|{\cal T}_i| \leq M, \forall i\in{\cal S}$.
\end{lem}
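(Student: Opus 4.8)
The plan is to prove the existence of the claimed pair $(M,\mathcal{S})$ by a counting argument that balances two competing effects. Using the distribution $R_j$ defined in~\eqref{eq:concentration}, I will define, for each threshold $M\in\{0,1,\ldots,K\}$, the candidate set $\mathcal{S}^{(M)}=\{i: |\mathcal{T}_i|\le M\}$, whose normalized size is $\sum_{j=0}^{M} R_j$. The goal is to show that for at least one value of $M$ we have $\sum_{j=0}^{M} R_j \ge \frac{2M+1}{4B}$. The intuition matching the $B=1$ example is that if no small threshold captures enough low-complexity messages, then many messages must have large transmit sets, which—through the backhaul constraint~\eqref{eq:backhaul_constraint}—forces a large fraction of messages to have \emph{empty} transmit sets, and those empty-set messages are themselves captured by the threshold $M=0$.

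First I would set up the backhaul constraint in terms of the $R_j$: inequality~\eqref{eq:backhaul_constraint} reads $\sum_{j=0}^{K} j\,R_j \le B$, and of course $\sum_{j=0}^{K} R_j = 1$. Next I would argue by contradiction: suppose the conclusion fails, i.e.\ for \emph{every} $M$ we have $\sum_{j=0}^{M} R_j < \frac{2M+1}{4B}$. The plan is to extract a contradiction with the budget $\sum_j j R_j \le B$. Writing $F_M=\sum_{j=0}^{M} R_j$ for the cumulative fraction, the assumed strict inequalities give pointwise upper bounds $F_M < \frac{2M+1}{4B}$ for all $M$. I would then use the standard identity $\sum_{j} j R_j = \sum_{M\ge 0}(1-F_M)$ (summation by parts / tail-sum formula, valid since $R_j$ is supported on $\{0,\ldots,K\}$) to convert the cumulative bounds into a lower bound on the mean transmit set size, and show this lower bound exceeds $B$, contradicting the budget. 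Concretely, the assumed bounds force $1-F_M > 1-\frac{2M+1}{4B}$ for the range of $M$ where the right side is positive, and summing these positive terms should overshoot $B$.

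The main obstacle I anticipate is handling the boundary and integrality carefully, so that the $o(K)$-type slack in Lemma~\ref{lem:aux_ub} absorbs the difference between the continuous optimization and the discrete one, and so that the two regimes illustrated in the $B=1$ discussion (the case where a threshold directly captures enough messages, versus the case where the empty-set fraction $R_0$ is forced to be large) are unified into a single clean statement. In particular, the critical values are $M=0$, which captures the forced empty sets and yields $F_0=R_0\ge \frac{1}{4B}$ exactly when the constraint is tight, and intermediate $M$ where the linear growth rate $\frac{2M+1}{4B}$ of the requirement competes against the accumulated probability mass. I expect that checking $M=0$ against the minimal requirement $\frac{1}{4B}$, combined with the summation-by-parts contradiction for the remaining range, will close both cases. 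Once the existence of such $M$ and $\mathcal{S}=\mathcal{S}^{(M)}$ is established, Lemma~\ref{lem:aux_ub} immediately yields $\eta \le K - \frac{|\mathcal{S}|}{2M+1} + C_K \le K - \frac{K}{4B} + o(K)$, giving $\tau(B)\le \frac{4B-1}{4B}$ and completing the converse.
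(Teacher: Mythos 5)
Your proposal is correct, and its core step is a genuinely different argument from the paper's. To confirm it closes: the negation of the lemma is that $F_M<\frac{2M+1}{4B}$ for every $M\in\{0,1,\ldots,K\}$ (taking $\mathcal{S}^{(M)}=\{i:|\mathcal{T}_i|\le M\}$ is without loss of generality since it is the maximal qualifying set); the right-hand side of $1-F_M>1-\frac{2M+1}{4B}$ is positive exactly for $M\le 2B-1$, and the tail-sum identity gives $\sum_j jR_j=\sum_{M=0}^{K-1}(1-F_M)\ge\sum_{M=0}^{2B-1}(1-F_M)>\frac{1}{4B}\sum_{M=0}^{2B-1}(4B-2M-1)=\frac{4B^2}{4B}=B$ (the sum of the first $2B$ odd integers is $4B^2$), contradicting~\eqref{eq:backhaul_constraint}. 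The only loose end is the trivial edge case $K\le 2B-1$, where the tail sum does not reach $M=2B-1$; but there the negation already fails at $M=K$ because $F_K=1\ge\frac{2K+1}{4B}$, so the lemma holds with $\mathcal{S}=[K]$. The paper proceeds differently: it splits into two cases --- if $\sum_{j\ge 2B}R_j\le\frac{1}{4B}$ the lemma follows at once with $M=2B-1$; otherwise it introduces the extremal profile $R_0^*=R_{2B}^*=\frac{1}{4B}$, $R_j^*=\frac{1}{2B}$ for $1\le j\le 2B-1$, whose mean is exactly $B$, and reaches a contradiction through an exchange argument that successively modifies transmit sets until the backhaul budget is provably violated. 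Both proofs rest on the same extremal structure (your identity $\sum_{M=0}^{2B-1}\bigl(1-\frac{2M+1}{4B}\bigr)=B$ is precisely the statement that $R^*$ has mean $B$), but your summation-by-parts route is shorter, purely arithmetic, needs no case split, and avoids the paper's somewhat informally described ``successive application'' of the exchange step; what the paper's construction buys is that it exhibits the extremal assignment profile explicitly, mirroring the asymmetric message assignment used in the optimal coding scheme. Two small corrections to your plan: no $o(K)$ slack is needed here --- the lemma is exact, so nothing has to be absorbed into the $C_K$ of Lemma~\ref{lem:aux_ub} --- and your separate treatment of $M=0$ is unnecessary, since it is simply the first term of the sum.
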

\begin{proof}
Fix any message assignment satisfying~\eqref{eq:backhaul_constraint} for a $K-$user channel with backhaul constraint $B$, and let $R_j,j\in\{0,1,\ldots,K\}$ be defined as in~\eqref{eq:concentration}. If $\sum_{j=2B}^{K} R_j \leq \frac{1}{4B}$, then more than $\frac{4B-1}{4B}K$ users have a transmit set whose size is at most $2B-1$, and the lemma follows  with $M=2B-1$. It then suffices to assume that $\sum_{j=2B}^{K} R_j > \frac{1}{4B}$ in the rest of the proof. We show in the following that there exists an integer $M \in\{0,\ldots,2B-2\}$ such that $\sum_{j=0}^M R_j > \frac{2M+1}{4B}$, thereby completing the proof of the lemma.

Define $R_j^*, j\in\{0,1,\ldots,2B\}$ such that $R_0^*=R_{2B}^*=\frac{1}{4B}$, and $R_j^*=\frac{1}{2B}, \forall j\in\{1,\ldots,2B-1\}$. Now, note that $\sum_{j=0}^{2B} R_j^*=1$, and $\sum_{j=0}^{2B} jR_j^*=B$. It follows that if $R_j=R_j^*, \forall j\in\{0,\ldots,2B\}$, and $R_j=0, \forall j \geq 2B+1$, then the constraint in~\eqref{eq:backhaul_constraint} is tightly met, i.e., $\frac{\sum_{i=1}^K |{\cal T}_i|}{K}=B$. We will use this fact in the rest of the proof. 

We prove the statement by contradiction. Assume that $\sum_{j=2B}^K R_j > R_{2B}^* = \frac{1}{4B}$, and that $\forall M\in\{0,1,\ldots,2B-2\}, \sum_{j=0}^M R_j \leq \sum_{j=0}^M R_{j}^* = \frac{2M+1}{4B}$. We know from~\eqref{eq:backhaul_constraint} that $\sum_{j=0}^{K} jR_j \leq \sum_{j=0}^{2B} jR_j^*=B$. Also, since $\sum_{j=0}^{K} R_j = \sum_{j=0}^{2B} R_j^*=1$ and $\sum_{j=2B}^K R_j > R_{2B}^*$, it follows that there exists an integer $M \in \{0,1,\ldots,2B-1\}$ such that $R_M > R_M^*$; let $m$ be the smallest such integer. Since $\sum_{j=0}^m R_j \leq \sum_{j=0}^m R_j^*$, and $\forall j\in\{0,1,\ldots,m-1\}, R_j \leq R_j^*$, we can construct another message assignment by removing elements from some transmit sets whose size is $m$, such that the new assignment satisfies~\eqref{eq:backhaul_constraint}, and has transmit sets ${\cal T}_i^*$ where $\forall j\in\{0,1,\ldots,m\}, |\{i:i\in[K], |{\cal T}_i^*| = j\}| \leq R_j^*$. By successive application of the above argument, we can construct a message assignment that satisfies~\eqref{eq:backhaul_constraint}, and has transmit sets ${\cal T}_i^*$ where $\forall j\in\{0,1,\ldots,2B-1\}, |\{i:i\in[K], |{\cal T}_i^*| = j\}| \leq R_j^*$ and $|\{i:i\in[K], |{\cal T}_i^*| \geq 2B\}| \geq R_{2B}^*$. Note that the new assignment has to violate~\eqref{eq:backhaul_constraint} since $\sum_{j=0}^{2B} jR_j^*=B$, and we reach a contradiction.
\end{proof}

We now know from lemmas~\ref{lem:aux_ub} and~\ref{lem:comb_argument} that under the backhaul load constraint of~\eqref{eq:backhaul_constraint}, the DoF for any $K-$user channel is upper bounded by $\frac{4B-1}{4B}K+o(K)$. It follows that the asymptotic per user DoF $\tau(B) \leq \frac{4B-1}{4B}$, thereby proving the upper bound of Theorem~\ref{thm:main_result}.

\section{Discussion and Generalizations}\label{sec:discussion}
\subsection{Maximum Transmit Set Size Constraint}
In~\cite{ElGamal-Annapureddy-Veeravalli-arXiv12}, we considered the problem where each transmit set size is bounded by a cooperation constraint $M$, i.e., $|{\cal T}_i| \leq M, \forall i\in[K]$. The DoF achieving coding scheme was then characterized for every value of $M$. We note that in the considered problem with an average transmit set size constraint $B$, the per user DoF $\tau(B)$ can be achieved using a combination of the schemes that are characterized as optimal in~\cite{ElGamal-Annapureddy-Veeravalli-arXiv12} for the cases of $M=2B-1$ and $M=2B$. We note that even though the maximum transmit set size constraint may not reflect a physical constraint, the solutions in~\cite{ElGamal-Annapureddy-Veeravalli-arXiv12} provide a useful toolset that can be used to achieve the optimal per user DoF value under the more natural constraint on the total backhaul load that is considered in this work.

\subsection{Locally Connected Networks}
 Using a convex combination of the schemes that are optimal under the maximum transmit set size constraint can also provide good coding schemes for the more general locally connected channel model that is considered in~\cite{ElGamal-Annapureddy-Veeravalli-arXiv12}, where each receiver can see interference from $L$ neighbouring transmitters. More precisely, for the following channel model,
\begin{eqnarray}\label{eq:general_channel}
&&H_{i,j} \text{ is not identically } 0, \nonumber\\&&\text { if and only if } i \in \left[j- \left \lfloor \frac{L}{2} \right \rfloor , j+ \left \lceil \frac{L}{2} \right \rceil \right].
\end{eqnarray}
 Let $\tau_L(B)$ be the asymptotic per user DoF for a locally connected channel defined in~\eqref{eq:general_channel} with connectivity parameter $L$. Then we can use a convex combination of the schemes that are characterized as optimal in~\cite{ElGamal-Annapureddy-Veeravalli-arXiv12} to achieve the inner bounds stated in Table~\ref{tab:results} for the case where $B=1$.

\begin{table}[b]
\begin{center}
    \begin{tabular}{ | l | l | l | l | l | l |}
    \hline
    & $L=2$ & $L=3$ & $L=4$ & $L=5$ & $L=6$ \\ \hline
     $\tau_L(B=1) \geq$ & $\frac{2}{3}$ & $ \frac{3}{5}$ & $\frac{5}{9}$ & $\frac{11}{21}$ & $\frac{1}{2}$ \\ \hline
    \end{tabular}
\end{center}
\caption{Achievable per user DoF values for locally connected channels with a backhaul constraint $\sum_{i=1}^{K} |{\cal T}_i| \leq K$.}
\label{tab:results}
\end{table}

Now, we note that the inner bounds stated in Table~\ref{tab:results} can be achieved through the use of only zero-forcing transmit beamforming. In other words, there is no need for the symbol extension idea required by the asymptotic interference alignment scheme of~\cite{Cadambe-IA}. In~\cite[Theorem $8$]{ElGamal-Annapureddy-Veeravalli-arXiv12}, it was shown that for $L \geq 2$, by allowing each message to be available at one transmitter, the asymptotic per user DoF is $\frac{1}{2}$; it was also shown in~\cite[Theorem $6$]{ElGamal-Annapureddy-Veeravalli-arXiv12} that the $\frac{1}{2}$ per user DoF value cannot be achieved through zero-forcing transmit forming for $L \geq 3$. In contrast, in Table~\ref{tab:results} it can be seen that for $L \leq 6$, the $\frac{1}{2}$ per user DoF value can be achieved through zero-forcing transmit beamforming and a flexible design of the backhaul links, without incurring additional overall load on the backhaul ($B=1$).

\subsection{Two-Dimensional Networks}
The insights we have in this work on the backhaul design for linear interference networks, may apply in denser networks by treating the denser network as a set of interfering linear networks. For example, consider the two-dimensional network depicted in Figure~\ref{fig:two-dim} where each transmitter is connected to four cell edge receivers. The precise channel model for a $K-$user channel is as follows,

\begin{eqnarray}\label{eq:twodim_channel}
&&H_{i,j} \text{ is not identically } 0, \text { if and only if }\nonumber\\&& i \in \left\{j,j+1,j+\left \lfloor \sqrt{K} \right \rfloor, j+\left \lfloor \sqrt{K} \right \rfloor+1\right\}.\nonumber\\
\end{eqnarray}

\begin{figure}[b]
  \centering
\subfloat[]{\label{fig:two-dim}\includegraphics[height=0.15\textwidth]{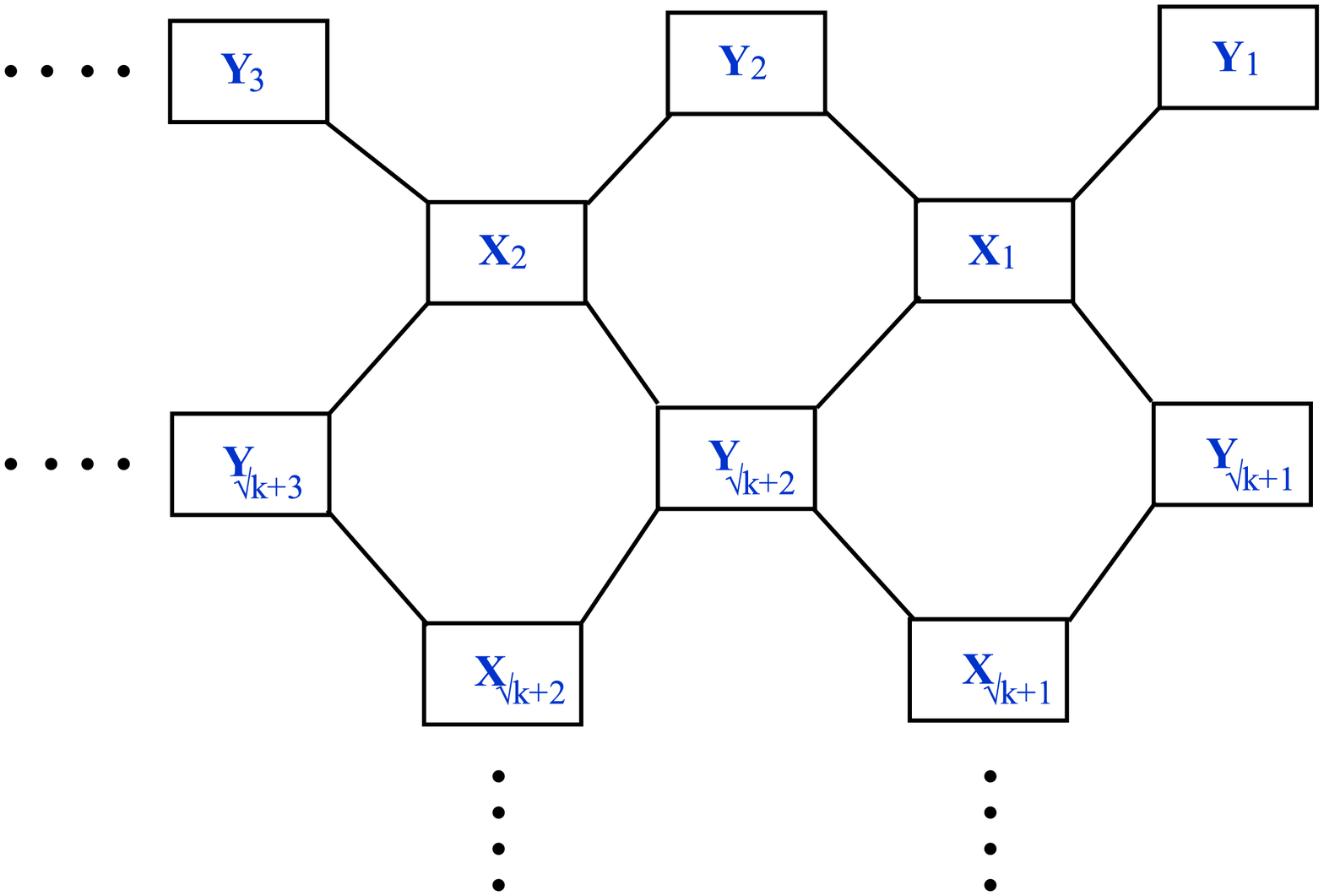}}                
\quad\quad\subfloat[]{\label{fig:twodim-codingscheme}\includegraphics[width=0.19\textwidth]{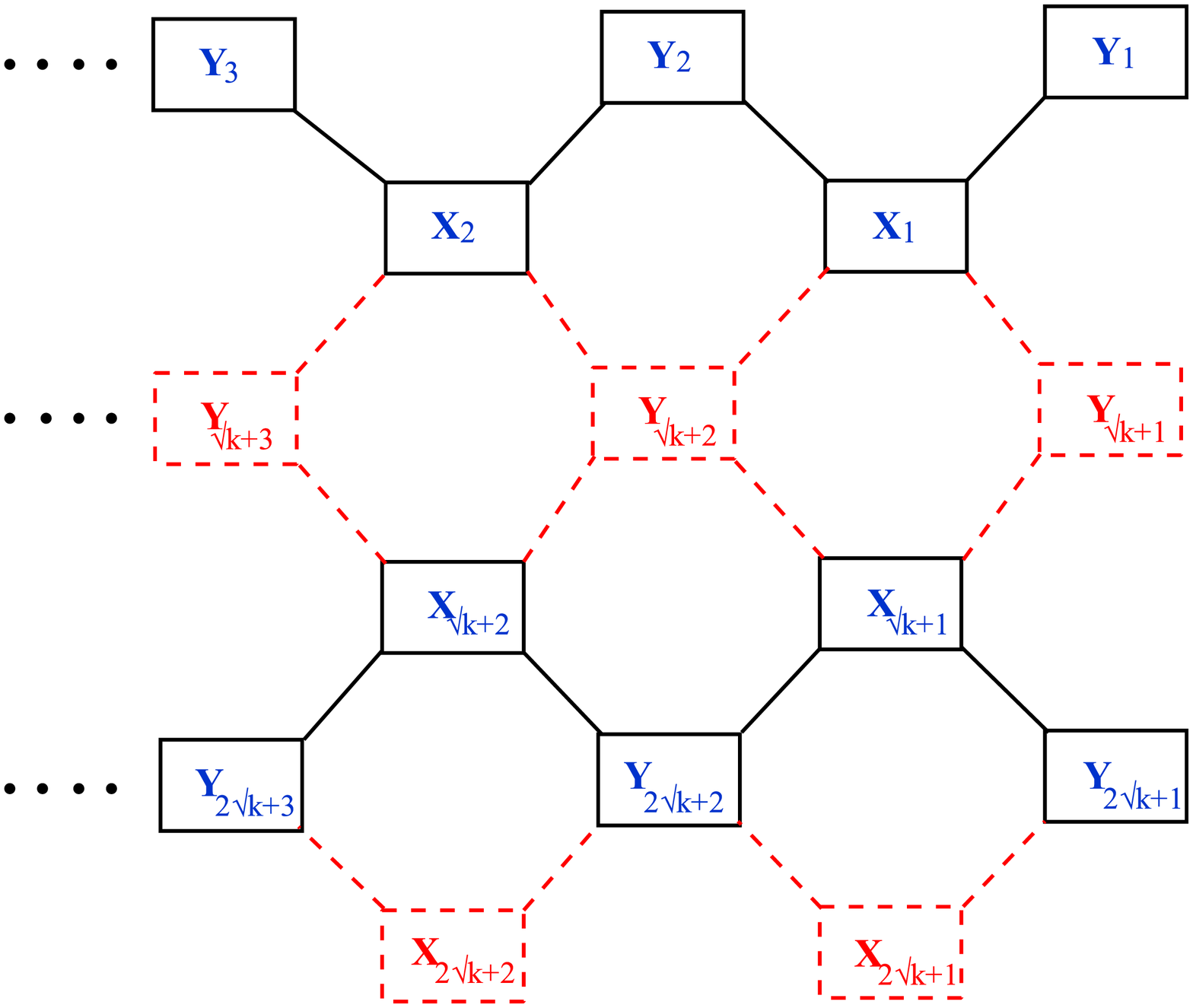}}
  \caption{Two dimensional interference network. In $(a)$, we plot the channel model, with each transmitter being connected to four surrounding cell edge receivers. In $(b)$, we show an example coding scheme where dashed red boxes and lines represent inactive nodes and edges. The signals $\{X_1,\ldots,X_{\sqrt{K}}\}$ and $\{Y_1,\ldots,Y_{\sqrt{K}}\}$ form a linear subnetwork. Similarly, the signals $\{X_{\sqrt{K}+1},\ldots,X_{2\sqrt{K}}\}$ and $\{Y_{2\sqrt{K}+1},\ldots,Y_{3\sqrt{K}}\}$ form a linear subnetwork}.
  \label{fig:two-dimensional}
\end{figure}

For this channel model, we can show that by assigning each message to one transmitter, i.e., imposing the constraint $|{\cal T}_i| \leq 1, \forall i\in[K],$ the asymptotic per user DoF is at most $\frac{1}{2}$, and the use of only zero-forcing transmit beamforming can lead to at most $\frac{4}{9}$ per user DoF. However, under the backhaul load constraint $\frac{\sum_{i=1}^K |{\cal T}_i|}{K}\leq 1$, a per user DoF value of $\frac{5}{9}$ can be achieved using only zero-forcing transmit beamforming. This can be done by deactivating every third row of transmitters, and splitting the rest of the network into non-interfering linear subnetworks (see Figure~\ref{fig:twodim-codingscheme}). 
In each subnetwork, a backhaul load constraint of $\frac{3}{2}$ is imposed. For example, the following constraint is imposed on the first row of users, $\frac{\sum_{i=1}^{\left \lfloor \sqrt{K} \right \rfloor} |{\cal T}_i|}{\left \lfloor \sqrt{K} \right \rfloor}\leq\frac{3}{2}$. A convex combination of the schemes that are characterized as optimal in~\cite{ElGamal-Annapureddy-Veeravalli-arXiv12} for the cases of maximum transmit set size constraints $M=2$ and $M=3$ is then used to achieve $\frac{5}{6}$ per user DoF in each active subnetwork while satisfying a backhaul load constraint of $\frac{3}{2}$. Since $\frac{2}{3}$ of the subnetworks are active, a per user DoF of $\frac{5}{9}$ is achieved while satisfying a backhaul load constraint of unity.

\section{Conclusion}\label{sec:conclusion}
We studied the potential gains offered by CoMP transmission in linear interference networks, through a backhaul load constraint that limits the average transmit set size across the users. We characterized the asymptotic per user DoF, and showed that the optimal coding scheme relies only on zero-forcing transmit beamforming. The backhaul constraint is satisfied in the optimal scheme by assigning some messages to more than $B$ transmitters and others to fewer than $B$ transmitters, where $B$ is the average transmit set size. We showed that local cooperation is sufficient to achieve the DoF in large linear interference networks. We also noted that the characterized asymptotic per user DoF for linear interference networks can be achieved by using a convex combination of the coding schemes that are identified as optimal in~\cite{ElGamal-Annapureddy-Veeravalli-arXiv12} under a cooperation constraint that limits the maximum size of a transmit set, as opposed to the average as we considered in this work. We then illustrated that these results hold in more general networks of practical relevance to achieve rate gains and simplify existing coding schemes. 
\bibliographystyle{IEEEtran}

\begin{thebibliography}{3}

\bibitem{Wyner}
A.~Wyner, ``{S}hannon-{T}heoretic {A}pproach to
  a {G}aussian {C}ellular {M}ultiple-{A}ccess {C}hannel,'' \emph{IEEE Trans.
  Inf. Theory}, vol.~40, no.~5, pp. 1713 --1727, Nov. 1994.

\bibitem{ElGamal-Annapureddy-Veeravalli-arXiv12}
{A.~El Gamal}, V.~S.~Annapureddy, and V.~V.~Veervalli, ``Interference Channels with Coordinated Multi-Point Transmission: Degrees of Freedom, Message Assignment, and Fractional Reuse,'' \emph{Submitted to IEEE Trans. Inf. Theory, available at http://arxiv.org/abs/1211.2897.} Nov. 2012

\bibitem{Cadambe-IA}
V.~Cadambe and S.~A.~Jafar, ``{I}nterference {A}lignment and {D}egrees of
  {F}reedom of the {K}-{U}ser {I}nterference {C}hannel,'' \emph{IEEE Trans.
  Inf. Theory}, vol.~54, no.~8, pp. 3425 --3441, Aug. 2008.

\bibitem{Lapidoth-Shamai-Wigger-ISIT07}
A.~Lapidoth, S.~Shamai (Shitz) and M.~A.~Wigger, ``A linear interference network with local Side-Information,'' in \emph{Proc. IEEE International Symposium on Information Theory (ISIT)}, Nice, Jun. 2007.

\bibitem{CoMP-book}
P.~Marsch and G.~P.~Fettweis ``Coordinated Multi-Point in Mobile Communications: from theory to practice,''   First Edition, \emph{Cambridge}, Aug. 2011.

\bibitem{Shamai-Wigger-ISIT11}
S.~Shamai (Shitz) and M.~A.~Wigger, ``Rate-Limited Transmitter-Cooperation in Wyner's asymmetric interference network,'' in \emph{Proc. IEEE International Symposium on Information Theory (ISIT)}, Saint Petersburg, Aug. 2011.

\bibitem{Maamari-Tuninetti-Devroye-arXiv13}
D.~Maamari, D.~Tuninetti and N.~Devroye ``Approximate Sum-Capacity of K-user Cognitive Interference Channels with Cumulative Message Sharing,'' \emph{available at http://arxiv.org/abs/1301.6198.} Jan. 2013


\bibitem{Lapidoth-Levy-Shamai-Wigger-arXiv12}
A.~Lapidoth, N.~Levy, S.~Shamai (Shitz) and M.~A.~Wigger ``Cognitive Wyner networks with clustered decoding,'' \emph{Submitted to IEEE Trans. Inf. Theory, available at http://arxiv.org/abs/1203.3659.} Mar. 2012

\end{thebibliography}

\end{document}